\documentclass[11pt,twocolumn]{article}

\usepackage{amsmath,amssymb,amsthm}
\usepackage{bm}
\usepackage{booktabs}
\usepackage{tabularx}
\usepackage{graphicx}
\usepackage[hidelinks]{hyperref}
\usepackage[margin=1in]{geometry}
\usepackage{enumitem}
\usepackage{natbib}
\usepackage{xcolor}
\usepackage{pgfplots}
\usepackage{placeins}
\pgfplotsset{compat=1.18}
\usepgfplotslibrary{fillbetween}

\newtheorem{theorem}{Theorem}
\newtheorem{definition}{Definition}
\newtheorem{lemma}{Lemma}

\title{The Preservation Tradeoff: A Thermodynamic Bound in the Diminishing-Returns Regime}
\author{Amadeus Brandes\\
Independent Researcher, Germany\\
\texttt{brandesamadeus@gmail.com}}
\date{}

\begin{document}

\maketitle

\begin{abstract}
Thermodynamic systems that preserve information against thermal fluctuations face a tradeoff distinct from transmission (Shannon) or erasure (Landauer). We formalize the preservation problem by defining the preservation stiffness $\mathcal{S}_\kappa$, a response function analogous to magnetic susceptibility, and derive the Stiffness-Odds Identity: at optimal allocation, the stiffness equals the ratio of payload to maintenance capacity. This identity is the paper's central contribution. It reduces optimal preservation to a single measurable response variable and provides a substrate-agnostic diagnostic for thermodynamic efficiency---applicable wherever maintenance competes with payload, regardless of whether the underlying substrate is biochemical, electronic, or algorithmic.

For all systems in the diminishing-returns regime, we prove the unconditional bound $\kappa^* < 0.50$.

For the subclass exhibiting smooth saturation with rate parameter $a \in [2,3]$---an empirically characterized efficiency frontier, not a universal constant---the optimum is further constrained to the 30--50\% band. We motivate this functional form from two independent physical principles: Shannon error exponents and thermodynamic dissipation bounds. We then illustrate consistency with representative operating points from kinetic proofreading in \textit{E.~coli} and protocol overhead in TCP/IP networks, and specify conditions under which the framework is falsifiable.
\end{abstract}

\section{Introduction}

The thermodynamics of information is grounded in two well-defined limits. Shannon's noisy channel coding theorem bounds the rate of information transmission by the channel capacity $C$~\cite{Shannon1948}. Landauer's principle bounds the cost of information erasure by $k_B T \ln 2$~\cite{Landauer1961}.

However, physical systems often face a third distinct challenge: \emph{preservation}. From DNA repair mechanisms to error-correcting memory in computation, systems must actively suppress entropy generation to maintain a target macrostate over time. Unlike transmission (moving bits) or erasure (resetting bits), preservation involves the continuous maintenance of state fidelity against a thermal bath. While processes like kinetic proofreading occur during the transmission of genetic information, we treat the excess dissipation driving the discrimination step specifically as the maintenance overhead required to preserve fidelity.

This problem has been treated in specific contexts---reliability engineering, kinetic proofreading~\cite{Hopfield1974}, and finite-blocklength coding theory~\cite{Gallager1965,Polyanskiy2010}---but always in model-specific, dimensional terms. Unlike prior work analyzing specific mechanisms, we derive a substrate-agnostic constraint that links maintenance overhead directly to a response function $\mathcal{S}_\kappa$, yielding a bound comparable in form to capacity or Landauer's limit.

Related perspectives include stochastic thermodynamics of information processing~\cite{Seifert2012,Parrondo2015} and the finite-blocklength approach to channel coding~\cite{Polyanskiy2010}, which bound error probability as a function of redundancy. Our contribution is to recast the preservation problem in the language of fluctuation-response theory, identifying the stiffness $\mathcal{S}_\kappa$ as the natural control variable.

Our central result is the Stiffness-Odds Identity (Eq.~\ref{eq:identity}), which shows that at optimal allocation the preservation stiffness $\mathcal{S}_\kappa$ must equal the resource odds $(1-\kappa^*)/\kappa^*$. This gives a structural criterion for preservation: it converts the problem of predicting optimal maintenance into the problem of measuring a single response function, independently of the physical substrate. From this identity we prove an unconditional bound: for any system in the diminishing-returns regime, $\kappa^* < 0.50$. For the subclass with smooth saturation near the efficiency frontier ($a \in [2,3]$), the bound sharpens to a 30--50\% band. We motivate this regime from two independent physical routes---information-theoretic constraints on error exponents and thermodynamic constraints on dissipation---whose convergence on the same exponential-saturation form (derived in Section~\ref{sec:physical_derivation}) gives the regime a physical basis rather than treating it as a purely phenomenological fit. The inclusion of a non-biochemical system (TCP/IP) alongside molecular proofreading is deliberate: it probes whether the band reflects thermodynamic dissipation specifically or the generic allocation structure $(1-\kappa)\Delta R(\kappa)$ under diminishing returns, with $\Delta R$ denoting reliability gain above any nonzero baseline.

\section{Formalism: The Preservation Stiffness}

Consider a system with total resource capacity normalized to unity. This capacity is partitioned into payload $(1-\kappa)$ and maintenance $\kappa$. The system's effective throughput is defined as
\begin{equation}
T_{\mathrm{eff}}(\kappa) = (1-\kappa)R(\kappa),
\label{eq:throughput}
\end{equation}
where $R(\kappa)$ is the reliability (probability of correctness) as a function of maintenance investment.

To establish the generality of this framework, we map the variables to their physical counterparts in Table~\ref{tab:rosetta}. This mapping reveals that preservation is a universal thermodynamic class, distinct from the specific substrate (biological, electronic, or algorithmic).

\begin{table*}[t]
\centering
\caption{The Preservation Rosetta Stone: Mapping thermodynamic variables across substrates. The stiffness $\mathcal{S}_\kappa$ serves as the universal response function linking these domains.}
\label{tab:rosetta}
\renewcommand{\arraystretch}{1.4}
\begin{tabular}{l c l l}
\toprule
\textbf{Variable} & \textbf{Symbol} & \textbf{Thermodynamics} & \textbf{Protocol / Biology} \\
\midrule
Capacity & $1$ & Free Energy ($F$) & Bandwidth / GTP Budget \\
Maintenance & $\kappa$ & Dissipated Work ($W_{\text{diss}}$) & Overhead / Proofreading \\
Reliability & $R(\kappa)$ & State Fidelity & Goodput / Accuracy \\
Efficiency & $a$ & Coupling ($\Delta E / k_B T$) & Signal-to-Noise Ratio \\
Response & $\mathcal{S}_\kappa$ & Stiffness (Inverse $\chi$) & Saturation metric \\
\bottomrule
\end{tabular}
\end{table*}

To characterize the response of the system to maintenance injection, we define the \emph{preservation stiffness}
\begin{equation}
\mathcal{S}_\kappa \equiv \frac{R(\kappa)}{\kappa R'(\kappa)} = \left(\frac{\partial \ln R}{\partial \ln \kappa}\right)^{-1}.
\label{eq:stiffness}
\end{equation}
This quantity is the inverse of the elasticity of reliability. In thermodynamic terms, it measures the stiffness of the error-suppression mechanism. A low $\mathcal{S}_\kappa$ implies that reliability is highly sensitive to marginal maintenance investment (soft mode), while a high $\mathcal{S}_\kappa$ implies saturation (stiff mode).

Assume $R$ is differentiable on $(0,1)$ and that the optimum occurs at an interior point $\kappa^* \in (0,1)$; corner optima ($\kappa^* \in \{0,1\}$) correspond to degenerate regimes (no maintenance or no payload) and are excluded from the stiffness condition. Maximizing Eq.~\eqref{eq:throughput} yields the first-order condition $-R(\kappa^*) + (1-\kappa^*)R'(\kappa^*) = 0$. Rearranging and substituting the definition of $\mathcal{S}_\kappa$, we obtain the \textbf{Stiffness-Odds Identity}:
\begin{equation}
\boxed{\mathcal{S}_\kappa(\kappa^*) = \frac{1-\kappa^*}{\kappa^*}}
\label{eq:identity}
\end{equation}

\noindent\textbf{This identity is the central result of this paper.} All subsequent bounds on $\kappa^*$ follow from Eq.~\eqref{eq:identity} combined with constraints on $\mathcal{S}_\kappa$. The identity converts the problem of predicting optimal allocation into the problem of bounding a single response function. The right-hand side is the \emph{resource odds}---the ratio of payload capacity to maintenance capacity. At optimal efficiency, the system's stiffness must match this resource leverage.

This identity provides a diagnostic for thermodynamic efficiency. If $\mathcal{S}_\kappa > (1-\kappa)/\kappa$, the system is in a \emph{stiff mode}---over-investing in saturated maintenance and wasting payload capacity. If $\mathcal{S}_\kappa < (1-\kappa)/\kappa$, the system is in a \emph{soft mode}---under-investing despite high responsiveness, resulting in suboptimal throughput due to excessive error rates. Equilibrium occurs when stiffness matches resource odds.

For systems with nonzero baseline reliability ($R(0) \neq 0$), the framework applies to the normalized reliability gain above baseline; we introduce the appropriate normalization in Section~\ref{sec:validation} when analyzing TCP.

\section{Physical Derivation of the Diminishing-Returns Regime}
\label{sec:physical_derivation}

The critical physical question is the functional form of $R(\kappa)$. We now show that the exponential-saturation form $R(\kappa) = 1 - e^{-g(\kappa)}$ with concave $g$ emerges from two independent physical principles. This dual derivation---from information theory and thermodynamics---provides the physical motivation for the regime studied here: the same functional constraint arises from entirely different physical considerations.

\subsection{Information-Theoretic Route}
\label{sec:shannon_route}

Consider a system that must maintain state fidelity over a noisy channel with finite capacity $C$. The fundamental constraint is Shannon's noisy channel coding theorem: reliable communication requires coding rate $r < C$~\cite{Shannon1948}.

For finite blocklength $n$, Gallager's error exponent bounds give the probability of decoding error~\cite{Gallager1965}:
\begin{equation}
P_{\mathrm{error}} \sim \exp(-n \cdot E(r)),
\label{eq:gallager}
\end{equation}
where $E(r)$ is the reliability function (error exponent) of the channel, which depends on the coding rate $r$ and satisfies $E(r) > 0$ for $r < C$ and $E(C) = 0$.

Now interpret $n$ as the redundancy devoted to error protection, proportional to maintenance investment $\kappa$. Writing $n = \lambda \kappa$ for some constant $\lambda$ (the redundancy per unit maintenance), we have:
\begin{equation}
\varepsilon(\kappa) = 1 - R(\kappa) \sim \exp(-\lambda \kappa \cdot E(r)).
\label{eq:error_kappa}
\end{equation}

\textbf{Key property:} The reliability function $E(r)$ is concave in the coding rate $r$ and satisfies $E(r) \to 0$ as $r \to C$. Finite-blocklength analysis~\cite{Polyanskiy2010} refines this picture: for systems operating below capacity, the achievable error probability decays at most exponentially in blocklength, with rate determined by $E(r)$.

Inverting, we obtain $R(\kappa) = 1 - \exp(-g(\kappa))$ where $g(\kappa) = \lambda \kappa \cdot E(r)$. For a fixed operating rate $r < C$, $E(r)$ is a positive constant and $g(\kappa)$ is \emph{linear} (hence concave) in $\kappa$. Nonlinear concave $g$ arises when the effective exponent per unit redundancy decreases with $\kappa$ due to coding/decoding constraints, feedback, or other saturation mechanisms; we capture this general behavior with the concave $g$ assumption in Definition 1.

\textbf{Key constraint from information theory:} The rate parameter $a = g'(0)$ depends on both the channel (through $E(r)$) and the coding efficiency (through $\lambda$). For any fixed channel and coding scheme operating below capacity, $a$ is finite and positive. The essential point is not a specific bound on $a$, but rather that the exponential-saturation form $R(\kappa) = 1 - e^{-g(\kappa)}$ with concave $g$ emerges from the structure of channel coding.

\subsection{Thermodynamic Route}
\label{sec:landauer_route}

Consider the same preservation problem from the perspective of thermodynamic dissipation. Maintaining reliability $1 - \varepsilon$ against thermal fluctuations requires the system to operate as a driven nonequilibrium steady state (NESS), continuously exporting entropy to sustain a low-error configuration. As recently derived for molecular templating networks, maintaining such non-equilibrium low-entropy states imposes a fundamental free-energy cost~\cite{Ouldridge2025}.

The core constraint comes from the thermodynamics of information~\cite{Parrondo2015}. To maintain a target state against thermal drift, a system must continuously:
\begin{enumerate}[leftmargin=*,itemsep=2pt]
    \item \emph{Detect} deviations from the correct state (acquire information about errors);
    \item \emph{Correct} those deviations (dissipate free energy to restore the target).
\end{enumerate}
Both steps require entropy production. Stochastic thermodynamics~\cite{Seifert2012} establishes that the information gained about a system's state is bounded by the dissipation:
\begin{equation}
I_{\mathrm{gained}} \leq \frac{W_{\mathrm{diss}}}{k_B T \ln 2}.
\label{eq:info_diss_bound}
\end{equation}
Since error correction requires gaining information about errors, the rate of error suppression is bounded by available dissipation.

By Landauer's principle~\cite{Landauer1961}, erasing one bit of information requires dissipation of at least $k_B T \ln 2$. Each corrected error thus requires entropy export proportional to the information content of the error. For a system maintaining error probability $\varepsilon$ in steady state, the required dissipation rate scales with the log-odds of correctness:
\begin{equation}
\dot{W}_{\mathrm{diss}} \propto \ln(1/\varepsilon).
\label{eq:diss_rate}
\end{equation}

If maintenance investment $\kappa$ translates to sustained dissipation $W = \gamma \kappa$ (where $\gamma$ is the available free energy per unit maintenance), inverting Eq.~\eqref{eq:diss_rate} yields the steady-state error probability:
\begin{equation}
\varepsilon \sim \exp\!\left(-\frac{\gamma \kappa}{k_B T}\right),
\label{eq:thermo_error}
\end{equation}
and thus the reliability:
\begin{equation}
R(\kappa) = 1 - \exp\!\left(-\frac{\gamma \kappa}{k_B T}\right).
\label{eq:thermo_reliability}
\end{equation}

This exponential-saturation form---the same structure derived from Shannon error exponents in Section~\ref{sec:shannon_route}---emerges from the fundamental bound on how efficiently dissipation can be converted to error suppression. In minimal proofreading and copying networks, this scaling is the natural steady-state form; here we use it as the canonical smooth-regime representation rather than as a claim about all NESS architectures. (The Crooks fluctuation theorem~\cite{Crooks1999} provides complementary intuition: it relates the ratio of forward to reverse transition probabilities to dissipated work. However, preservation is better understood as steady-state maintenance than as a sequence of forward/reverse transient trajectories.)

\textbf{Key constraint from thermodynamics:} The rate parameter $a = g'(0) = \gamma/(k_B T)$ is bounded above by the available free energy per unit maintenance:
\begin{equation}
a \leq \frac{\Delta G_{\mathrm{available}}}{k_B T \cdot \text{(maintenance unit)}}.
\label{eq:a_thermo_bound}
\end{equation}

\subsection{Convergence: The Efficiency Frontier as an Attractor}
\label{sec:convergence}

The information-theoretic and thermodynamic routes converge on the same functional form:
\begin{equation}
R(\kappa) = 1 - \exp(-g(\kappa)), \quad g \text{ concave, } g(0) = 0, g' > 0.
\label{eq:unified_form}
\end{equation}

This convergence is not coincidental. \textbf{Error correction \emph{is} entropy export.} Shannon's channel capacity and Landauer's erasure bound are manifestations of the same underlying constraint: the rate at which a system can reduce uncertainty (whether interpreted as decoding errors or thermodynamic fluctuations) is bounded by its information-processing or dissipation capacity.

The rate parameter $a = g'(0)$ is not a universal constant but a system-specific figure of merit representing the coupling efficiency between maintenance resources and error suppression. It admits dual interpretations:
\begin{itemize}[leftmargin=*,itemsep=2pt]
    \item \textbf{Information theory:} Channel capacity per unit redundancy
    \item \textbf{Thermodynamics:} Available free energy per unit maintenance (in units of $k_B T$)
\end{itemize}

While $a$ can theoretically take any positive value, systems subject to strong optimization pressures---whether evolutionary (biology) or engineering (protocol design)---exhibit a striking convergence to a narrow range. We identify $a \in [2, 3]$ as the \emph{Thermodynamic Efficiency Frontier} for molecular and information-processing machines that use standard chemical fuels (ATP) and operate at typical biophysical efficiencies (see Appendix~\ref{app:dissipation} for empirical characterization).

The clustering of the rate parameter in this range reflects a fundamental finite-time thermodynamic tradeoff. Systems with $a < 2$ are noise-dominated, where thermal fluctuations spontaneously reverse error correction, leading to instability. Conversely, systems with $a \gg 3$ approach the reversible limit (``frozen'' regime), achieving high accuracy but at vanishingly slow operation speeds. The interval $a \in [2, 3]$ represents the maximum-power regime---the optimal compromise between thermodynamic stability and operational speed.

\paragraph{The Conditional Prediction.} The bound $a \in [2, 3]$ is not derived from first principles as a universal constant; it represents an empirical characterization of well-adapted systems operating at finite speed. The theory does not assert that all systems must have $a \in [2, 3]$. A system with very strong coupling (large $\eta_{\mathrm{cpl}}$ or $W_{\mathrm{budget}}$) could exhibit $a \gg 3$, but such systems typically approach the reversible limit and operate slowly. Conversely, a system with weak coupling could exhibit $a < 2$, but would be noise-dominated and unstable. Instead, the theory makes a \emph{falsifiable conditional prediction}: Given a system that has been optimized for throughput and reliability to operate near this efficiency frontier ($a \in [2, 3]$), its optimal maintenance allocation $\kappa^*$ is constrained to the 30--50\% band.

The observation that \textit{E.~coli} kinetic proofreading ($\kappa_{\mathrm{obs}} \approx 0.37$), TCP bulk transfer ($\kappa_{\mathrm{obs}} \approx 0.35$), and non-enzymatic DNA ligation~\cite{Aoyanagi2025} all independently occupy the same maintenance band suggests that $\kappa \in [0.30, 0.50]$ is a robust conditional optimum across disparate optimization landscapes when diminishing returns and smooth saturation apply. In molecular systems, this is consistent with operation near the thermodynamic efficiency frontier ($a \in [2,3]$); in TCP, it arises primarily from distributed stability constraints (see Section~\ref{sec:validation}).

\paragraph{Physical Interpretation of $a$.}

\textbf{Biological systems.} At $T = 310$~K, ATP/GTP hydrolysis provides $\Delta\mu \sim 20\, k_B T$ per molecule. Kinetic proofreading expends 2--4 molecules per discrimination step~\cite{Hopfield1974,Johansson2012}, but only a fraction $\eta_{\mathrm{cpl}}$ of this energy couples to the discrimination bias (see Appendix~\ref{app:dissipation} for the normalization). For well-adapted systems, the effective rate parameter $a \approx 2$--$3$ reflects this coupling efficiency.

\textbf{Network protocols.} TCP/IP provides a non-thermodynamic example where convergence to the preservation band arises from a different mechanism: distributed stability constraints rather than dissipation bounds. The observed $\kappa_{\mathrm{obs}} \approx 0.30$--$0.40$ in bulk TCP reflects overhead from acknowledgments, retransmissions, and---most significantly---capacity left unused by congestion control algorithms that maintain stability margin. Unlike molecular systems where $a \in [2,3]$ characterizes energetic coupling, TCP's convergence to the preservation band is better understood as a macroscopic stability phenomenon: in practice, protocols that attempt to operate below $\kappa \approx 0.30$ suffer instability under congestion, while those exceeding $\kappa \approx 0.50$ underutilize capacity. The preservation band is thus a mathematical consequence of the allocation objective $(1-\kappa)R(\kappa)$ combined with diminishing returns; the physical mechanisms generating those diminishing returns differ across substrates, but the optimal allocation constraint does not. (A stylized ARQ retransmission model is presented in Section~\ref{sec:validation} for comparison, but should not be conflated with the macroscopic TCP operating point.)

\textbf{Information-theoretic systems.} For binary symmetric channels with crossover probability $p = 0.01$, the capacity is $C \approx 0.92$ bits. Achieving error probability $10^{-6}$ requires blocklength $n \approx 500$, giving effective $a \approx 3$--$4$ when normalized appropriately~\cite{Polyanskiy2010}.

\section{The Diminishing-Returns Regime}

Having established the physical basis for the exponential-saturation form, we formalize it as a regime definition.

\begin{definition}[Diminishing-Returns Regime]
A system operates in the diminishing-returns regime if its reliability function takes the form $R(\kappa) = 1 - e^{-g(\kappa)}$ where $g(0) = 0$, $g'(\kappa) > 0$, and $g$ is concave ($g'' \le 0$).
\end{definition}

\begin{definition}[Smooth Saturation]
A system in the diminishing-returns regime exhibits \emph{smooth saturation} if, additionally, the rate of error suppression $g'(\kappa)$ varies by at most a factor $M$ over $\kappa \in [0, 1/2]$:
\begin{equation}
\frac{g'(0)}{g'(1/2)} \leq M
\end{equation}
for some moderate constant $M$ (e.g., $M \leq 3$). Equivalently, smooth saturation excludes systems with abrupt early saturation where $g'(\kappa)$ drops precipitously after an initial steep region.
\end{definition}

The linear reference case $g(\kappa) = a\kappa$ (constant $g'$) satisfies smooth saturation with $M = 1$. Empirically, systems with $a \in [2, 3]$ and smooth saturation typically exhibit $\kappa^*$ in the 30--50\% band; the lower bound arises because moderate $a$ values prevent the stiffness from growing too rapidly at small $\kappa$.

The concavity of $g(\kappa)$ corresponds to diminishing marginal returns in error suppression: each additional unit of maintenance investment yields a smaller reduction in error probability. As shown in Section~\ref{sec:physical_derivation}, this behavior emerges from fundamental constraints---finite channel capacity and bounded dissipation budgets---rather than model-specific assumptions.

Empirically, diminishing returns are observed in kinetic proofreading~\cite{Hopfield1974}, classical coding below threshold~\cite{Gallager1965}, and fault-tolerant computing. Systems exhibiting sharp threshold effects (e.g., LDPC codes near capacity~\cite{Richardson2001}) have convex $g$ and are explicitly excluded from the present framework. Such systems undergo phase transitions and require separate treatment.

For reliability functions of the exponential-saturation form, the stiffness has an explicit expression:
\begin{equation}
\mathcal{S}_\kappa = \frac{e^{g(\kappa)} - 1}{\kappa\, g'(\kappa)}.
\label{eq:chi_explicit}
\end{equation}
For the reference case $g(\kappa) = a\kappa$ (constant error-suppression rate), this simplifies to
\begin{equation}
\mathcal{S}_\kappa = \frac{e^{a\kappa} - 1}{a\kappa}.
\label{eq:chi_linear}
\end{equation}
This is the formula used to generate the curves in Fig.~\ref{fig:equilibrium}.

\section{The Preservation Band}

For systems in the diminishing-returns regime, the stiffness is bounded below, which constrains the optimal maintenance fraction.

\begin{theorem}[Upper Bound]
\label{thm:upper}
For any reliability function $R(\kappa) = 1 - e^{-g(\kappa)}$ with $g(0) = 0$, $g'(\kappa) > 0$ for $\kappa > 0$, and $g$ concave ($g'' \leq 0$), we have $\mathcal{S}_\kappa > 1$ for all $\kappa > 0$, with $\lim_{\kappa \to 0^+} \mathcal{S}_\kappa = 1$.
\end{theorem}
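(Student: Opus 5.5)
Starting from the explicit formula \eqref{eq:chi_explicit}, $\mathcal{S}_\kappa = (e^{g(\kappa)}-1)/(\kappa g'(\kappa))$, the inequality $\mathcal{S}_\kappa > 1$ is equivalent to
\begin{equation*}
e^{g(\kappa)} - 1 > \kappa\, g'(\kappa) \qquad (\kappa > 0),
\end{equation*}
since the denominator is positive. I will split this into two elementary inequalities and chain them. The first is $e^{x} - 1 > x$ for every $x > 0$, which holds because $e^x - 1 - x$ vanishes at $0$ and has derivative $e^x - 1 > 0$. The second is $g(\kappa) \ge \kappa g'(\kappa)$, which is the tangent-line characterization of concavity at the point $\kappa$, evaluated at $0$: $g(0) \le g(\kappa) + g'(\kappa)(0-\kappa)$, so with $g(0)=0$ we get $g(\kappa) \ge \kappa g'(\kappa)$. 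Equivalently, $h(t) = g(t) - t g'(t)$ satisfies $h(0)=0$ and $h'(t) = -t g''(t) \ge 0$; I would give the tangent-line version, since it only needs differentiability and concavity, not a second derivative.

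To chain them I need $g(\kappa) > 0$ for $\kappa > 0$. This follows from $g(0)=0$ and $g' > 0$ on $(0,\kappa]$ via the mean value theorem. Then
\begin{equation*}
e^{g(\kappa)} - 1 > g(\kappa) \ge \kappa g'(\kappa),
\end{equation*}
which gives $\mathcal{S}_\kappa > 1$. The strictness comes entirely from the exponential step, so it survives even in the linear case $g = a\kappa$, where the concavity step is an equality.

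For the limit I would write
\begin{equation*}
\mathcal{S}_\kappa = \frac{e^{g(\kappa)}-1}{g(\kappa)} \cdot \frac{g(\kappa)}{\kappa g'(\kappa)}.
\end{equation*}
As $\kappa \to 0^+$ we have $g(\kappa) \to 0$, so the first factor tends to $1$. For the second factor, $g(\kappa)/\kappa \to g'(0)$ and $g'(\kappa) \to g'(0)$, so it tends to $1$ provided $0 < g'(0) < \infty$ and $g'$ is continuous at $0$.

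This regularity at the endpoint is the main obstacle. The hypotheses only assert $g' > 0$ for $\kappa > 0$. I would handle it as follows.
\begin{itemize}[leftmargin=*,itemsep=2pt]
\item For concave $g$, $g'$ is nonincreasing, so $g'(0^+) = \lim_{\kappa\to0^+} g'(\kappa)$ exists in $(0,\infty]$.
\item If this limit is finite, the mean value theorem gives $g(\kappa)/\kappa = g'(\xi)$ for some $\xi \in (0,\kappa)$, so $g(\kappa)/\kappa \to g'(0^+)$ and the second factor tends to $1$.
\item If $g'(0^+) = \infty$, the limit claim can fail. For example, $g = \sqrt{\kappa}$ gives a ratio of $2$.
\end{itemize}
So I would state explicitly that the limit uses the paper's standing interpretation $a = g'(0) \in (0,\infty)$ from Section~\ref{sec:convergence}, where $a$ is finite and positive. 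The bound $\mathcal{S}_\kappa > 1$ itself needs no such assumption.
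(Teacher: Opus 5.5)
Your proof is correct and takes a somewhat different route from the paper's. The paper treats $R$ as a single concave function. It computes $R'' = (g''-(g')^2)e^{-g} < 0$, sets $\varphi(\kappa) = R(\kappa) - \kappa R'(\kappa)$ with $\varphi(0)=0$ and $\varphi'(\kappa) = -\kappa R''(\kappa) > 0$, and reads off $R > \kappa R'$. The limit is handled only by a one-line remark invoking L'H\^{o}pital.

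You instead factor the stiffness by the chain rule for elasticities, $\mathcal{S}_\kappa = \frac{e^{g}-1}{g}\cdot\frac{g}{\kappa g'}$. You bound the first factor strictly via $e^x-1>x$, and the second weakly via the supporting-line inequality for concave $g$ with $g(0)=0$. Both arguments are at heart the supporting line at $\kappa$ evaluated at $0$: the paper applies it to $R$, while you apply it to $g$ and treat the exponential separately.

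Your version buys three things:
\begin{itemize}
\item It needs no second derivative of $g$.
\item It shows where the strictness comes from: the saturation map $x\mapsto 1-e^{-x}$. This is the same curvature that supplies the $-(g')^2$ term in the paper's $R''$, and it explains why the bound stays strict even for linear $g$.
\item The same factorization gives the limit without L'H\^{o}pital.
\end{itemize}

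Your caveat about the limit is also correct and sharper than the paper. The paper tacitly assumes $g'(0)=a\in(0,\infty)$; its lemma writes $R'(0)=g'(0)>0$. The theorem as stated only asks for $g'>0$ when $\kappa>0$. Your example $g(\kappa)=\sqrt{\kappa}$ meets every stated hypothesis yet gives $\mathcal{S}_\kappa = 2(e^{\sqrt{\kappa}}-1)/\sqrt{\kappa}\to 2$. So the finite-slope assumption you add is genuinely required for the limit clause, though not for $\mathcal{S}_\kappa>1$.
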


\begin{proof}
See Appendix~\ref{app:stiffness} for a detailed derivation. The key step is showing that $R''(\kappa) = (g'' - (g')^2)e^{-g} < 0$ for $\kappa > 0$, which holds because $g'' \leq 0$ and $(g')^2 > 0$. This implies $R(\kappa) > \kappa R'(\kappa)$ and hence $\mathcal{S}_\kappa > 1$.
\end{proof}

Since $\mathcal{S}_\kappa > 1$ for all $\kappa > 0$ (Theorem 1), and the optimal $\kappa^*$ is interior, the Stiffness-Odds Identity (Eq.~\ref{eq:identity}) gives
\begin{equation}
\frac{1-\kappa^*}{\kappa^*} = \mathcal{S}_\kappa(\kappa^*) > 1 \quad\Longrightarrow\quad \kappa^* < \frac{1}{2}.
\label{eq:upper_bound}
\end{equation}

\textbf{The upper bound $\kappa^* < 0.50$ is unconditional for any system in the diminishing-returns regime.} It follows from the concavity of $g$ and positivity of $g'$ (which together imply $\mathcal{S}_\kappa > 1$), without assumptions about the rate parameter. This is the universal theorem of the paper; stronger interval claims require additional assumptions on the rate parameter $a$. Conversely, empirical observation of $\mathcal{S}_\kappa \leq 1$ is diagnostic evidence that a system operates outside this regime---signaling threshold effects, cooperative dynamics, or correlated failures.

\paragraph{Lower Bound (Conditional).} The rate parameter $a = g'(0)$ characterizes the initial steepness of error suppression. For the reference case $g(\kappa) = a\kappa$, the stiffness at $\kappa = 1/2$ is
\begin{equation}
c(a) = \mathcal{S}_\kappa(1/2) = \frac{e^{a/2} - 1}{a/2}.
\label{eq:c_of_a}
\end{equation}
The Stiffness-Odds Identity then gives the lower bound $\kappa^* \ge 1/(1 + c(a))$.

Since $c(a)$ is increasing in $a$, \textbf{larger rate parameters push the lower bound down}: more efficient error suppression allows systems to achieve target reliability with less maintenance. \textbf{The lower bound $\kappa^* \geq 0.30$ holds for $a \leq 3$}, which characterizes systems operating near the thermodynamic efficiency frontier (Section~\ref{sec:physical_derivation}). Systems with $a < 2$---requiring greater maintenance investment per unit of error suppression---would optimize at \emph{higher} $\kappa^*$, potentially exceeding 0.37. Systems with $a > 3$ exhibit strong coupling but slow kinetics (approaching the reversible limit); the lower bound drops below 0.30, but such systems are rare among those under selection pressure for throughput.

The 30--50\% band is therefore a prediction for physically realistic systems operating near the efficiency frontier with $a \in [2, 3]$, not a mathematical necessity for all diminishing-returns systems.

For more general concave $g$, the lower bound depends on the specific curvature; the linear case provides a reference estimate for systems with approximately constant error-suppression rate. Systems with abrupt early saturation---where $g'(\kappa)$ drops rapidly after an initial steep region---may optimize at lower $\kappa^*$, though such behavior is not observed in the empirical examples considered here. For $a \in [2,3]$, we have $c(a) \in [1.7, 2.3]$ and hence $\kappa^* \in [0.30, 0.50]$ for systems with smooth saturation.

\section{Illustrative Examples and Scope}
\label{sec:validation}

The conditional 30--50\% band serves as a diagnostic tool for systems operating in the diminishing-returns regime near the smooth-saturation frontier.

\paragraph{Biological Proofreading.} In \textit{E.~coli} protein synthesis, kinetic proofreading consumes GTP to discriminate between correct and incorrect codon matches. The energetic cost is approximately 3--4~$k_B T$ per peptide bond to achieve an error rate of $10^{-4}$~\cite{Hopfield1974}. Based on measurements of GTP consumption during translation~\cite{Johansson2012,Loveland2017}, the maintenance fraction---defined as the ratio of proofreading energy expenditure to translation-specific energy budget (EF-Tu and EF-G mediated GTP hydrolysis, excluding upstream aminoacylation costs independent of the fidelity mechanism)---is estimated at $\kappa_{\mathrm{obs}} \approx 0.35$--$0.40$. This sits squarely within the predicted interior band, suggesting that biological error correction operates near the thermodynamic efficiency frontier defined by Eq.~\eqref{eq:identity}. Recent experiments in non-enzymatic DNA ligation confirm that this kinetic proofreading mechanism is not limited to evolved enzymes, but arises as a fundamental thermodynamic property of information replication in synthetic systems~\cite{Aoyanagi2025}.

\paragraph{Network Reliability Protocols.}
TCP/IP provides a non-biological example where preservation overhead is both engineered and well documented. The protocol stack maintains end-to-end reliability through checksums, sequence numbers, acknowledgments, retransmissions, and congestion control---all of which consume channel capacity that would otherwise carry payload.

We define the maintenance fraction as the fraction of channel capacity not delivered as payload (goodput). For TCP, $\kappa$ should be interpreted as the fraction of link capacity diverted from payload either into explicit protocol overhead (headers, ACKs, retransmissions) \emph{or} into stability margin (deliberate under-utilization required for congestion stability). This $\kappa$ is therefore a control/response variable of a distributed feedback system, not a purely local energetic cost as in molecular proofreading.

Empirical studies of TCP congestion control report that, under typical Internet path conditions (packet loss rates on the order of $10^{-3}$--$10^{-2}$ and moderate congestion), goodput efficiencies of 60--70\% are common for long-lived flows, corresponding to $\kappa_{\mathrm{obs}} \approx 0.30$--$0.40$~\cite{Mathis1997,Ha2008}. This overhead arises from protocol headers, acknowledgment traffic, retransmissions, and---most significantly---capacity left unused by loss-based congestion control algorithms that reduce sending rates in response to packet loss.

\textbf{Note on TCP quantification:} The TCP estimate is qualitative, derived from typical goodput efficiencies under congestion; systematic measurement of the reliability function $R(\kappa)$ for specific TCP variants---varying retransmission limits, congestion windows, and loss rates to trace out the full curve---remains an empirical opportunity. The $\kappa_{\mathrm{obs}} \approx 0.35$ figure should be understood as an order-of-magnitude estimate consistent with the framework, not a precise measurement of the kind available for biological proofreading.

As a stylized model, consider independent packet losses with probability $p$, and permit up to $n$ total transmissions. We restrict attention to the \emph{link-error dominated regime} where $p$ reflects exogenous channel impairments (random bit errors, fading) rather than endogenous congestion---an approximation valid for low-utilization or wireless links but not for congestion-controlled flows where $p$ depends on the control algorithm itself. Under this restriction, the probability that a message is delivered correctly within a timeout window is
\begin{equation}
R(n) \approx 1 - p^{n}.
\label{eq:tcp_reliability}
\end{equation}

In the stylized ARQ-only model below, $\kappa$ denotes the \emph{redundancy component} of total TCP overhead; in real TCP, the observed $\kappa_{\mathrm{obs}}$ additionally includes stability margin from congestion control.

To map this to our framework, we define $\kappa(n) := (n-1)/n_{\max}$, where $n_{\max}$ is the maximum permitted transmissions including the initial attempt (a protocol parameter). This gives $\kappa \in [0, 1)$ with $\kappa = 0$ corresponding to no redundancy ($n=1$). Substituting yields
\begin{equation}
R(n) = 1 - p^{1 + \kappa n_{\max}} = 1 - p \cdot e^{-\kappa n_{\max} \ln(1/p)}.
\end{equation}

\textbf{Reconciling with the framework.} Note that $R(0) = 1 - p \neq 0$, whereas our formal regime definition requires $R(0) = 0$ (from $g(0) = 0$). To apply the framework, we define the \emph{normalized reliability improvement}:
\begin{equation}
\tilde{R}(\kappa) := \frac{R(\kappa) - R_0}{1 - R_0}, \quad R_0 := R(0) = 1 - p.
\end{equation}
This gives $\tilde{R}(0) = 0$ and $\tilde{R}(\kappa) = 1 - e^{-a\kappa}$ with $a = n_{\max} \ln(1/p)$, which is exactly the exponential-saturation form. For typical parameters ($p \approx 0.01$, $n_{\max} \approx 5$), this yields $a \approx 23$---larger than the biological range. Under this normalization, Eq.~\eqref{eq:identity} applies to the incremental-throughput objective $(1-\kappa)\tilde{R}(\kappa)$; we therefore compute $\tilde{\mathcal{S}}_\kappa := \tilde{R}/(\kappa \tilde{R}')$---the stiffness of the incremental reliability improvement---and compare it to the resource-odds condition.

\textbf{Important caveat:} The $a \approx 23$ value derived above applies to the stylized ARQ retransmission component in isolation. It does \emph{not} explain the observed macroscopic TCP operating point $\kappa_{\mathrm{obs}} \approx 0.35$, which is dominated by congestion-control stability margin rather than ARQ redundancy. The ARQ model is presented here as a reference case demonstrating that TCP's reliability function has the exponential-saturation form required by the framework; the convergence of real TCP to the preservation band is better understood as a macroscopic stability phenomenon (see discussion above).

Crucially, this standard ARQ model is mathematically isomorphic to the linear reference ansatz $g(\kappa) = a\kappa$ used in Fig.~\ref{fig:equilibrium}, with the effective rate parameter $a \propto -\ln p$ determined by the channel noise. For such models, and for throughput--reliability curves reported for high-speed TCP variants~\cite{Ha2008}, typical operating points correspond to normalized stiffnesses $\tilde{\mathcal{S}}_\kappa$ of order $1$--$2$, consistent with the equilibrium condition of Eq.~\eqref{eq:identity}.

Unlike biological systems shaped by evolution, network protocols are engineered artifacts. That both classes of systems converge to similar maintenance fractions is consistent with the preservation band being a structural constraint of the allocation objective under diminishing returns. The convergence is notable because TCP designers optimized for throughput under real-world conditions without explicit reference to thermodynamic bounds---yet arrived at overhead ratios in the same range as biological proofreading. Within the smooth-saturation model class, protocols that attempt to operate below $\kappa \approx 0.30$ face stability failures under congestion, while those exceeding $\kappa \approx 0.50$ underutilize capacity; this provides a plausible landscape explanation for why engineered systems may cluster in the preservation band.

\paragraph{Scope and Limitations.} The diminishing-returns assumption holds for memoryless noise channels with independent errors and for systems operating below coding thresholds. In systems dominated by burst errors, long-range correlations, or cooperative failure mechanisms, $R(\kappa)$ may exhibit convex regions or sharp thresholds. This breakdown is analogous to the waterfall region in LDPC decoding, where reliability exhibits a sharp transition rather than smooth saturation~\cite{Richardson2001}. In these regimes, the interior band solution is replaced by corner solutions (all-or-nothing protection), and the present framework does not apply.

The preservation band should therefore be understood as characterizing the ``smooth reliability regime'' where error suppression saturates gradually. The empirical success of the prediction for both kinetic proofreading and network protocols suggests this regime is physically significant.

\subsection{The Stiffness as a Diagnostic}
\label{sec:diagnostic}

Figure~\ref{fig:phase} recasts Eq.~\eqref{eq:identity} as a phase diagram in the $(\kappa, \mathcal{S}_\kappa)$ plane (for systems with nonzero baseline reliability such as TCP, the normalized stiffness $\tilde{\mathcal{S}}_\kappa$ is used; see Section~\ref{sec:validation}). The dashed curve $\mathcal{S}_\kappa = (1-\kappa)/\kappa$ is the equilibrium locus at which stiffness matches the resource odds. Points above this curve correspond to a ``stiff'' mode, where error suppression is saturated and maintenance is over-allocated; points below correspond to a ``soft'' mode, where reliability remains highly responsive to additional maintenance. The diagram makes $\mathcal{S}_\kappa$ a practical diagnostic: given an empirical estimate of $(\kappa, \mathcal{S}_\kappa)$, one can immediately read off whether a system under- or over-invests in maintenance.

\begin{figure}[t]
  \centering
  \begin{tikzpicture}
    \begin{axis}[
      width=\columnwidth,
      height=0.75\columnwidth,
      xlabel={Maintenance Fraction $\kappa$},
      ylabel={$\mathcal{S}_\kappa$ and $(1-\kappa)/\kappa$},
      xmin=0.15, xmax=0.55,
      ymin=0, ymax=6,
      legend pos=north east,
      legend style={font=\scriptsize, draw=black!20},
      grid=none,
      tick label style={font=\footnotesize},
      label style={font=\small},
    ]
    
    \fill[black, opacity=0.08] (axis cs:0.30,0) rectangle (axis cs:0.50,6);
    
    \draw[dashed, black!50] (axis cs:0.30, 0) -- (axis cs:0.30, 6);
    \draw[dashed, black!50] (axis cs:0.50, 0) -- (axis cs:0.50, 6);
    
    \addplot[black, dashed, thick, domain=0.15:0.55, samples=100] {(1-x)/x};
    \addlegendentry{$(1-\kappa)/\kappa$}
    
    \addplot[black!70, thick, solid, domain=0.15:0.55, samples=100] {(exp(2*x)-1)/(2*x)};
    \addlegendentry{$\mathcal{S}_\kappa$, $a=2$}
    
    \addplot[black, thick, densely dotted, domain=0.15:0.55, samples=100] {(exp(3*x)-1)/(3*x)};
    \addlegendentry{$\mathcal{S}_\kappa$, $a=3$}
    
    \addplot[black!50, thick, dashdotted, domain=0.15:0.55, samples=100] {(exp(5*x)-1)/(5*x)};
    \addlegendentry{$\mathcal{S}_\kappa$, $a=5$}
    
    \addplot[only marks, mark=*, black!70, mark size=2.5pt] coordinates {(0.396, 1.525)};
    \addplot[only marks, mark=*, black, mark size=2.5pt] coordinates {(0.358, 1.794)};
    \addplot[only marks, mark=*, black!50, mark size=2.5pt] coordinates {(0.301, 2.326)};
    
    \end{axis}
  \end{tikzpicture}
  \caption{Equilibrium condition for preservation. The dashed curve shows the resource odds $(1-\kappa)/\kappa$. Grayscale curves with distinct line styles show the stiffness $\mathcal{S}_\kappa$ for rate parameters $a = 2, 3, 5$. Intersections (dots) define the optimal $\kappa^*$; the shaded band marks the predicted 30--50\% range. For $a = 3$, $\kappa^* \approx 0.36$.}
  \label{fig:equilibrium}
\end{figure}

\begin{figure*}[t]
  \centering
  \begin{tikzpicture}
    \begin{axis}[
      width=0.95\textwidth,
      height=0.4\textwidth,
      xlabel={Maintenance Fraction $\kappa$},
      ylabel={Stiffness $\mathcal{S}_\kappa$ (TCP: $\tilde{\mathcal{S}}_\kappa$)},
      xmin=0.155, xmax=0.66,
      ymin=0.5, ymax=5.5,
      legend style={
        at={(0.95,0.995)},
        anchor=north east,
        font=\footnotesize,
        cells={anchor=west},
        row sep=1pt,
        fill=white,
        fill opacity=0.9,
        text opacity=1,
        draw=black!20,
      },
      grid=none,
      tick label style={font=\small},
      label style={font=\normalsize},
    ]
    
    \fill[black, opacity=0.08] (axis cs:0.30,0.5) rectangle (axis cs:0.50,5.5);
    
    \draw[dashed, black!50] (axis cs:0.30, 0.5) -- (axis cs:0.30, 5.5);
    \draw[dashed, black!50] (axis cs:0.50, 0.5) -- (axis cs:0.50, 5.5);
    
    \addplot[black, dashed, thick, domain=0.155:0.66, samples=200] {(1-x)/x};
    \addlegendentry{Equilibrium: $\mathcal{S}_\kappa = (1-\kappa)/\kappa$}
    
    \addplot[only marks, mark=*, black, mark size=4pt] coordinates {(0.37, 1.7)};
    \addlegendentry{\textit{E.\ coli} proofreading}
    
    \addplot[only marks, mark=square*, black!60, mark size=4pt] coordinates {(0.35, 1.5)};
    \addlegendentry{TCP bulk transfer}
    
    \addplot[only marks, mark=o, black!50, mark size=4pt, thick] coordinates {(0.55, 2.8)};
    \addlegendentry{Over-maintained (hypothetical)}
    
    \addplot[only marks, mark=square, black!50, mark size=4pt, thick] coordinates {(0.18, 3.5)};
    \addlegendentry{Under-maintained (hypothetical)}
    
    \end{axis}
  \end{tikzpicture}
  \caption{Phase diagram in the $(\kappa, \mathcal{S}_\kappa)$ plane. The dashed curve shows the equilibrium locus $\mathcal{S}_\kappa = (1-\kappa)/\kappa$ where stiffness matches resource odds. Above the curve (stiff mode), error suppression is saturated; below (soft mode), the system is under-protected. The shaded band marks the predicted 30--50\% range. Empirical operating points for \textit{E.~coli} proofreading and TCP lie near the equilibrium locus; hypothetical over- and under-maintained configurations are shown for comparison.}
  \label{fig:phase}
\end{figure*}

\subsection{Falsifiability}
\label{sec:falsifiability}

The preservation band is a conditional prediction: \emph{given} that a system operates in the diminishing-returns regime, \emph{then} its optimal maintenance fraction lies in the interior range $\kappa^* \in [0.30, 0.50]$ and its stiffness satisfies $\mathcal{S}_\kappa > 1$ at the operating point.

The protocol is substrate-agnostic: the same measurements of $R(\kappa)$, $\kappa_{\mathrm{obs}}$, and $\mathcal{S}_\kappa$ apply whether maintenance is implemented via GTP hydrolysis in biology, protocol overhead in communication networks, or parity bits and redundancy in digital storage.

This prediction can be tested empirically via the following protocol:
\begin{enumerate}[leftmargin=*,itemsep=2pt]
  \item \textbf{Confirm regime membership.} Measure the reliability function $R(\kappa)$ across the accessible range of maintenance investment. Compute $g(\kappa) = -\ln(1 - R(\kappa))$ and verify that $g$ is concave ($g'' \le 0$), confirming diminishing returns. Systems that exhibit sharp threshold behavior---for example, abrupt changes in $R(\kappa)$ at a critical $\kappa$---lie outside the present framework and cannot falsify it.
  
  \item \textbf{Measure the operating point.} Determine the system's actual maintenance fraction $\kappa_{\mathrm{obs}}$ and reliability $R_{\mathrm{obs}}$. In biological systems, this requires calorimetric or stoichiometric estimates of resources devoted to error correction versus payload production. In engineered systems, protocol overhead and capacity accounting usually suffice.
  
  \item \textbf{Compute the stiffness.} From the measured $R(\kappa)$ curve, evaluate $\mathcal{S}_\kappa = R/(\kappa R')$ at $\kappa_{\mathrm{obs}}$, either by differentiating a parametric fit or by finite differences on the empirical data.
  
  \item \textbf{Test the predictions.} A system that satisfies the concavity condition in step~1 but exhibits $\kappa^* > 0.50$, or $\mathcal{S}_\kappa < 1$ at its operating point, would directly falsify the preservation band.
\end{enumerate}

It is important to distinguish falsification from scope exclusion. A system displaying $\mathcal{S}_\kappa < 1$ may simply operate outside the diminishing-returns regime, for instance due to cooperative failures, threshold coding, or long-range correlations in the noise. Such observations do not falsify the theory; rather, they confirm the diagnostic power of $\mathcal{S}_\kappa$ in identifying regime boundaries. Falsification requires a system that demonstrably satisfies the diminishing-returns condition (concave $g$) yet violates the predicted bounds on $\kappa^*$ or $\mathcal{S}_\kappa$.

Promising candidates for stringent tests include:
(i)~redundant storage arrays (RAID), where redundancy level and capacity overhead are tunable and reliability curves are well characterized;
(ii)~ribosomal proofreading across species, where organisms with different metabolic constraints may optimize at different points along the band; and
(iii)~software fault-tolerance schemes such as $N$-version programming and recovery blocks in safety-critical systems, where the objective function effectively corresponds to $\alpha > 1$ in the notation of Sec.~\ref{sec:robustness}.
The framework thus predicts not only typical maintenance fractions, but also its own boundaries: values $\mathcal{S}_\kappa < 1$ serve as a diagnostic of regimes where preservation is governed by threshold phenomena and phase-transition physics rather than smooth response theory.

\section{Robustness and the Asymmetry of Risk}
\label{sec:robustness}

Our analysis thus far has identified the optimal maintenance fraction $\kappa^*$ by locating the peak of the effective throughput. However, biological and engineered systems rarely operate exactly at a mathematical peak. A critical question is the \emph{stability} of this operating point: what are the consequences of deviating from $\kappa^*$?

We examine the ``landscape of risk'' by analyzing the shape of the throughput curve $T_{\mathrm{eff}}(\kappa)$ for the linear reference model $g(\kappa) = a\kappa$. The following analysis is a \emph{model-class} argument: it applies to systems well-described by smooth exponential saturation with moderate $a$, not to systems with threshold effects or cooperative dynamics. As shown in Figure~\ref{fig:landscape}, the penalty for deviation is fundamentally asymmetric.

\begin{figure}[!htb]
  \centering
  \begin{tikzpicture}
    \begin{axis}[
      width=\columnwidth,
      height=0.65\columnwidth,
      xlabel={Maintenance Fraction $\kappa$},
      ylabel={Normalized Throughput $T/T_{\max}$},
      xmin=0, xmax=0.8,
      ymin=0, ymax=1.15,
      xtick={0, 0.3, 0.5, 0.8},
      xticklabels={0, 0.30, 0.50, 0.80},
      ytick={0, 0.5, 1.0},
      grid=none,
      label style={font=\small},
      tick label style={font=\scriptsize},
    ]
    
    \def\func{(1-x)*(1-exp(-3*x))/0.42}
    
    \fill[black, opacity=0.08] (axis cs:0.3,0) rectangle (axis cs:0.5,1.15);
    
    \draw[dashed, black!50] (axis cs:0.3, 0) -- (axis cs:0.3, 1.15);
    \draw[dashed, black!50] (axis cs:0.5, 0) -- (axis cs:0.5, 1.15);
    
    \addplot[black, line width=1.2pt, smooth, domain=0.02:0.78, samples=100] {\func};
    
    \addplot[only marks, mark=*, mark size=2.5pt, black] coordinates {(0.358, 1.0)};
    \node[font=\scriptsize, anchor=south west] at (axis cs:0.365, 1.0) {$\kappa^*$};
    
    \end{axis}
  \end{tikzpicture}
  \caption{Effective throughput $T_{\mathrm{eff}}(\kappa) = (1-\kappa)R(\kappa)$ for the linear reference model ($a=3$). The optimum $\kappa^* \approx 0.36$ falls within the predicted 30--50\% band (shaded). Left of the band, throughput drops sharply due to rapid reliability loss (the ``error cliff''); right of the band, decline is gradual and dominated by payload sacrifice (the ``stagnation slope''). This asymmetry makes the band a robust operating regime rather than a knife-edge optimum.}
  \label{fig:landscape}
\end{figure}

\paragraph{The Error Cliff vs.\ The Stagnation Slope.}
To the left of the band ($\kappa < 0.30$), the system is dominated by the exponential term in the reliability function $R(\kappa) = 1 - e^{-g(\kappa)}$. A small reduction in maintenance leads to a catastrophic rise in error probability, causing effective throughput to collapse. We term this region the \textbf{Error Cliff}.

To the right of the band ($\kappa > 0.50$), reliability saturates ($R \approx 1$). Further investment yields negligible error reduction, and the throughput decay is dominated by the linear term $(1-\kappa)$. We term this region the \textbf{Stagnation Slope}.

\paragraph{Evolutionary and Engineering Implications.}
Within the smooth-saturation class illustrated in Figure~\ref{fig:landscape}, this topological asymmetry provides a plausible landscape explanation for why systems may cluster in similar maintenance windows. They are not merely seeking a peak; they are \emph{avoiding a cliff}.

Evolutionary or engineering gradients driving a system from minimal maintenance ($\kappa \approx 0$) experience strong negative feedback (death/failure), rapidly pushing the system up the cliff. Once the system crests into the 30--50\% band, the gradient flattens dramatically. The penalty for overshooting into the Stagnation Slope is mild (inefficiency) compared to the penalty for undershooting (instability). Systems in this model class therefore tend to accumulate in this band, trapped by the asymmetry of risk.

This provides a model-class argument for why the preservation band may function as a broad basin of attraction rather than a knife-edge optimum. Systems need not be precisely tuned; they need only avoid the cliff. The figure should be interpreted as illustrating this landscape logic for smooth-saturation systems, not as a universal law applicable to all architectures.

\paragraph{Robustness to Objective Variation.}
The results above assume systems maximize effective throughput $T_{\mathrm{eff}} = (1-\kappa)R(\kappa)$. Systems that weight reliability more heavily might maximize $(1-\kappa)R(\kappa)^\alpha$ for $\alpha > 1$. The first-order condition yields $\mathcal{S}_\kappa(\kappa^*) = \alpha(1-\kappa^*)/\kappa^*$, shifting $\kappa^*$ upward as $\alpha$ increases. The band widens but persists: the upper bound becomes $\kappa^* \le \alpha/(1+\alpha)$, approaching unity only as $\alpha \to \infty$.

\section{Multi-Dimensional Extension}
\label{sec:multidim}

The scalar analysis extends to systems with multiple maintenance channels. For vector allocation $\bm{\kappa} = (\kappa_1, \ldots, \kappa_m)$ with total maintenance $\kappa_{\mathrm{tot}} = \sum_i \kappa_i$, suppose the reliability function factorizes or admits a sufficient statistic in $\kappa_{\mathrm{tot}}$: $R(\bm{\kappa}) = R(\kappa_{\mathrm{tot}})$.

The first-order conditions for maximizing $(1 - \kappa_{\mathrm{tot}})R(\kappa_{\mathrm{tot}})$ yield the same Stiffness-Odds Identity with $\kappa_{\mathrm{tot}}$ replacing $\kappa$:
\begin{equation}
\mathcal{S}_{\kappa_{\mathrm{tot}}}(\kappa^*_{\mathrm{tot}}) = \frac{1 - \kappa^*_{\mathrm{tot}}}{\kappa^*_{\mathrm{tot}}}.
\label{eq:multidim_identity}
\end{equation}

The band $[0.30, 0.50]$ therefore applies to \emph{total} maintenance allocation regardless of its decomposition across channels, provided the sufficient-statistic assumption holds. The internal allocation among $\kappa_i$ depends on the marginal reliabilities $\partial R / \partial \kappa_i$, but the aggregate constraint persists. This result is relevant for systems where maintenance takes multiple forms (e.g., temporal redundancy, spatial redundancy, and active error correction in fault-tolerant computing).

\section{Conclusion}

We have characterized preservation as a distinct operational regime governed by the stiffness $\mathcal{S}_\kappa$. The Stiffness-Odds Identity (Eq.~\ref{eq:identity}) is the paper's central contribution: it converts optimal allocation into a single-variable response problem and provides a substrate-agnostic diagnostic for thermodynamic efficiency.

For systems in the diminishing-returns regime, we prove an unconditional upper bound: optimal maintenance allocation satisfies $\kappa^* < 0.50$. For the subclass exhibiting smooth saturation with rate parameter $a \in [2,3]$---an empirically characterized efficiency frontier, not a universal constant---the optimum is further constrained to the 30--50\% band.

We motivate this functional form from two independent physical principles: Shannon error exponents bound the information-theoretic route, while thermodynamic dissipation bounds constrain the energy cost of error correction. Their convergence on the same exponential-saturation form provides a physics-based motivation for the preservation band.

The framework specifies falsifiable predictions. The unconditional bound $\kappa^* < 0.50$ can be refuted by any diminishing-returns system optimizing above 50\%. The conditional band $\kappa^* \in [0.30, 0.50]$ can be refuted by well-adapted systems (verifiably operating near the efficiency frontier) that optimize outside this range. The framework explicitly excludes systems near coding thresholds or with cooperative error mechanisms, where reliability exhibits sharp transitions; characterizing such systems remains an open problem, likely requiring tools from the statistical mechanics of phase transitions.

The diagnostic power of $\mathcal{S}_\kappa$ extends beyond prediction to system design: measuring a system's position in the $(\kappa, \mathcal{S}_\kappa)$ plane immediately reveals whether it under- or over-invests in maintenance, providing actionable guidance for optimization.

\appendix

\section{Derivation of the Stiffness Bound}
\label{app:stiffness}

We prove Theorem~\ref{thm:upper}: for any reliability function $R(\kappa) = 1 - e^{-g(\kappa)}$ with $g(0) = 0$, $g'(\kappa) > 0$, and $g$ concave, the preservation stiffness satisfies $\mathcal{S}_\kappa > 1$ for all $\kappa > 0$.

\begin{lemma}
\label{lemma:phi}
Define $\varphi(\kappa) = R(\kappa) - \kappa R'(\kappa)$. For $R(\kappa) = 1 - e^{-g(\kappa)}$ with $g$ concave and $g(0) = 0$, we have $\varphi(\kappa) > 0$ for all $\kappa > 0$.
\end{lemma}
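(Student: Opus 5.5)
The plan is to show that $\varphi(0)=0$ and $\varphi'(\kappa)<0$ for $\kappa>0$. That would give the opposite sign, so the direction of the derivative needs checking first. We have $\varphi'(\kappa) = R'(\kappa) - R'(\kappa) - \kappa R''(\kappa) = -\kappa R''(\kappa)$. So if $R''<0$ on $(0,\infty)$, then $\varphi'>0$ there. Combined with $\varphi(0)=R(0)-0=1-e^{-g(0)}=0$, this yields $\varphi(\kappa)>0$ for $\kappa>0$ by integration, or by the mean value theorem: $\varphi(\kappa)=\varphi(0)+\kappa\varphi'(\xi)$ for some $\xi\in(0,\kappa)$, and $\varphi'(\xi)>0$. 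Note that $\varphi(0)=0$ requires $\kappa R'(\kappa)\to 0$ as $\kappa\to0^+$. This holds if $g$ is $C^1$ up to $0$; concavity already gives $g'(0^+)$ finite or $+\infty$, so I will assume $g'(0)<\infty$, as the paper does when it writes $a=g'(0)$.

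The key computation is $R''$. From $R=1-e^{-g}$ we get $R'=g'e^{-g}$ and
\begin{equation*}
R''=\bigl(g''-(g')^2\bigr)e^{-g}.
\end{equation*}
Since $g''\le 0$ by concavity and $(g')^2>0$ because $g'>0$ for $\kappa>0$, we obtain $R''<0$ strictly on $(0,\infty)$. Hence $\varphi'(\kappa)=-\kappa R''(\kappa)>0$ for $\kappa>0$.

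An alternative that avoids second derivatives also works, which helps if $g$ is only assumed concave rather than twice differentiable. First, $R$ is concave: $R=h\circ g$ with $h(x)=1-e^{-x}$ increasing and strictly concave, and $g$ concave. Second, concavity with $R(0)=0$ gives the tangent inequality $R(0)\le R(\kappa)+R'(\kappa)(0-\kappa)$, that is, $\kappa R'(\kappa)\le R(\kappa)$. For strictness I would use the strict concavity of $h$ together with $g(\kappa)>g(0)$; the latter follows from $g'>0$. This makes the tangent inequality strict on $[0,\kappa]$.

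The main obstacle is the boundary behaviour at $\kappa=0$: justifying $\varphi(0)=0$ and the integration from $0$. I would handle it with the finite $g'(0)$ assumption, or through the tangent-line argument, which needs only continuity of $R$ at $0$. The sign computation itself is routine.
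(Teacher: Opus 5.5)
Your main argument is the paper's proof: $\varphi(0)=0$, $\varphi'=-\kappa R''$, and $R''=(g''-(g')^2)e^{-g}<0$ because $g''\le 0$ and $g'>0$, so $\varphi$ increases from $0$. Your opening sentence misstates the intended sign of $\varphi'$, but you correct it immediately. Two things you add are sound but not in the paper: your care about the continuity of $\varphi$ at $0$, and the tangent-line alternative, which needs only concavity of $g$ and no second derivative.
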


\begin{proof}
We proceed by analyzing $\varphi$ at the boundary and in the interior.

\textbf{Step 1: Boundary behavior.} At $\kappa = 0$:
\begin{align}
R(0) &= 1 - e^{-g(0)} = 1 - e^0 = 0, \\
R'(\kappa) &= g'(\kappa) e^{-g(\kappa)}, \\
R'(0) &= g'(0) e^0 = g'(0) > 0.
\end{align}
Thus $\varphi(0) = R(0) - 0 \cdot R'(0) = 0$.

\textbf{Step 2: Derivative of $\varphi$.} We compute:
\begin{equation}
\varphi'(\kappa) = R'(\kappa) - R'(\kappa) - \kappa R''(\kappa) = -\kappa R''(\kappa).
\end{equation}

\textbf{Step 3: Sign of $R''$.} For $R(\kappa) = 1 - e^{-g(\kappa)}$:
\begin{align}
R' &= g' e^{-g}, \\
R'' &= g'' e^{-g} - (g')^2 e^{-g} = (g'' - (g')^2) e^{-g}.
\end{align}
Since $g'' \leq 0$ (concavity) and $(g')^2 > 0$, we have $g'' - (g')^2 < 0$, hence $R'' < 0$ for all $\kappa > 0$.

\textbf{Step 4: Positivity of $\varphi$.} Since $R'' < 0$, we have $\varphi'(\kappa) = -\kappa R''(\kappa) > 0$ for $\kappa > 0$. Combined with $\varphi(0) = 0$, this implies $\varphi(\kappa) > 0$ for all $\kappa > 0$.
\end{proof}

\begin{proof}[Proof of Theorem~\ref{thm:upper}]
From Lemma~\ref{lemma:phi}, $R(\kappa) - \kappa R'(\kappa) > 0$ for $\kappa > 0$. Since $R(\kappa) > 0$ and $R'(\kappa) > 0$ for $\kappa > 0$ (as $R(\kappa) = 1 - e^{-g(\kappa)}$ with $g' > 0$), we can divide:
\begin{equation}
\frac{R(\kappa) - \kappa R'(\kappa)}{R(\kappa)} > 0 \quad \Longrightarrow \quad 1 - \frac{\kappa R'(\kappa)}{R(\kappa)} > 0.
\end{equation}
Rearranging:
\begin{equation}
\frac{R(\kappa)}{\kappa R'(\kappa)} > 1 \quad \Longrightarrow \quad \mathcal{S}_\kappa > 1. \qedhere
\end{equation}
\end{proof}

\textbf{Remark.} The proof relies only on the concavity of $g$ ($g'' \leq 0$), positivity of $g'$, and the boundary condition $g(0) = 0$. No assumption about the rate parameter $a = g'(0)$ is required. The strict bound $\mathcal{S}_\kappa > 1$ holds for all $\kappa > 0$. In the limit $\kappa \to 0^+$, L'H\^{o}pital's rule gives $\mathcal{S}_\kappa \to 1$, confirming the theorem statement.

\section{Thermodynamic Bound on the Rate Parameter}
\label{app:dissipation}

We characterize the range $a \in [2, 3]$ as the \emph{Thermodynamic Efficiency Frontier}---the empirical cluster of systems that have been optimized (by evolution or engineering) to efficiently couple maintenance resources to error suppression.

\paragraph{Clarification on epistemology.} The bound $a \in [2, 3]$ is not derived from first principles as a universal constant. Rather, it represents an empirical characterization of the parameter range occupied by well-adapted systems operating at finite speed. The theory's prediction is conditional: \emph{if} a system operates near this efficiency frontier, \emph{then} its optimal $\kappa^*$ falls in the 30--50\% band. Systems with very large $a$ (strong coupling, approaching the reversible limit with slow kinetics) or very small $a$ (weak coupling, noise-dominated) are physically possible but atypical among systems under selection pressure for throughput.

\paragraph{Normalization and bookkeeping.} To avoid ambiguity, we distinguish extensive and intensive quantities explicitly:

Let $\Delta\mu$ denote the chemical free energy available per ATP/GTP hydrolysis ($\Delta\mu \approx 20\, k_B T$). Let $n_{\mathrm{ATP}}$ be the mean number of hydrolyses expended per discrimination event (including futile cycles). The total dissipated work per discrimination is then $W_{\mathrm{diss}} = n_{\mathrm{ATP}} \Delta\mu$---an \emph{extensive} cost.

We parameterize maintenance investment by the dimensionless fraction $\kappa := W_{\mathrm{diss}} / W_{\mathrm{budget}}$, where $W_{\mathrm{budget}}$ is the total resource budget per decision (payload + maintenance) in the same units. Under this normalization, \emph{increasing $n_{\mathrm{ATP}}$ increases $\kappa$} (higher maintenance share), rather than changing the coupling strength $a$.

Only a fraction $\eta_{\mathrm{cpl}} \in (0,1)$ of $W_{\mathrm{diss}}$ contributes to biasing the correct outcome; hence the suppression exponent satisfies
\begin{equation}
g(\kappa) \equiv -\ln(1 - R(\kappa)) \approx \eta_{\mathrm{cpl}} \frac{W_{\mathrm{diss}}}{k_B T} = \eta_{\mathrm{cpl}} \frac{W_{\mathrm{budget}}}{k_B T} \kappa
\end{equation}
in the smooth regime. Therefore the rate parameter is
\begin{equation}
a = g'(0) = \eta_{\mathrm{cpl}} \frac{W_{\mathrm{budget}}}{k_B T}.
\label{eq:a_empirical}
\end{equation}

\paragraph{Physical interpretation.} Under this normalization:
\begin{itemize}[leftmargin=*,itemsep=2pt]
    \item The rate parameter $a$ is an \emph{intensive} coupling strength: higher $\eta_{\mathrm{cpl}}$ (better energy-to-discrimination conversion) yields larger $a$.
    \item The maintenance fraction $\kappa$ is an \emph{extensive} cost: more futile cycles increase $\kappa$ at fixed $W_{\mathrm{budget}}$, but do not directly change $a$.
    \item Any degradation due to futile cycling enters through the empirical $\kappa$ accounting (and potentially through reduced $\eta_{\mathrm{cpl}}$ if cycles are reliability-neutral), not through an explicit $1/n_{\mathrm{ATP}}$ factor in $a$.
\end{itemize}

\paragraph{Empirical characterization of the frontier.} For ATP-driven molecular machines at physiological conditions ($W_{\mathrm{budget}} \sim 40$--$60\, k_B T$ per decision, accounting for both payload synthesis and proofreading), observed coupling efficiencies $\eta_{\mathrm{cpl}} \in [0.04, 0.08]$ yield $a \in [2, 4]$. Well-adapted systems cluster near $a \approx 2.5$--$3$, corresponding to the regime where discrimination energy ($\sim 2$--$3\, k_B T$ per unit $\kappa$) exceeds thermal noise but does not approach the reversible limit.

\paragraph{Interpretation of bounds.}

\textbf{Lower bound ($a \geq 2$):} Systems with $a < 2$ are weakly coupled (small effective discrimination energy per unit $\kappa$). Achieving a target reliability therefore requires a larger maintenance fraction, making such systems throughput-disfavored. The discrimination barrier would be less than $2\, k_B T$, making error correction susceptible to spontaneous reversal by thermal fluctuations.

\textbf{Upper bound ($a \leq 3$ for optimized systems):} Systems with $a > 3$ achieve high coupling strength but typically at the cost of slow operation (approaching the reversible limit). Such systems are rare among those under selection pressure for throughput.

\paragraph{Summary.} The preservation band $\kappa^* \in [0.30, 0.50]$ is robust for the specific (but broad) class of ATP-fueled and analogously efficient systems that have optimized their energetic coupling to settle in the $a \in [2, 3]$ regime. Biological proofreading and synthetic chemistry converge to this band via the thermodynamic efficiency frontier. Network protocols converge to similar $\kappa$ values through a different mechanism---distributed stability constraints that penalize both under-protection (instability) and over-protection (under-utilization)---demonstrating that the preservation band emerges from multiple independent optimization pressures.

\FloatBarrier
\bibliographystyle{unsrtnat}

\end{document}